\documentclass{article}
\title{\bf Generalized Reed-Muller codes over Galois rings}
\author{Harinaivo ANDRIATAHINY$^{(1)}$\\e-mail: hariandriatahiny@gmail.com\\Desiré Arsène RATAHIRINJATOVO$^{(2)}$\\e-mail: ratahirinjatovo@gmail.com\\Sanni José ANDRIANALISEFA$^{(3)}$\\e-mail: sunnysfat@gmail.com\\\\$^{1,2,3}$Mention: Mathematics and Computer Science,\\Domain: Sciences and Technologies,\\University of Antananarivo, Madagascar}

\usepackage{amsmath,amssymb}
\usepackage{amscd,amsfonts}
\usepackage{amsthm}
\usepackage[english]{babel}
\usepackage[T1]{fontenc}
\usepackage[ansinew]{inputenc}
\swapnumbers
\theoremstyle{plain}
\newtheorem{thm}{Theorem}[section]
\newtheorem{prop}[thm]{Proposition}

\newtheorem{cor}[thm]{Corollary}

\theoremstyle{definition}

\newtheorem{rem}[thm]{Remark}

\DeclareMathOperator{\card}{Card}

\DeclareMathOperator{\rank}{Rank}
\DeclareMathOperator{\supp}{Supp}

\begin{document}

\maketitle

\begin{abstract}
Recently, Bhaintwal and Wasan studied the Generalized Reed-Muller codes over the prime power integer residue ring. In this paper, we give a generalization of these codes to Generalized Reed-Muller codes over Galois rings.
\end{abstract}
Keywords: Reed-Muller code, Galois ring, Code over ring\\
MSC 2010: 94B05, 94B15, 12E05

\section{Introduction}
In 1994, Hammons et al.\cite{HK} showed that some non-linear binary codes with very good parameters are images, under the Gray map, of some linear codes over $\mathbb{Z}_{4}$, the ring of integers modulo $4$. This led to the active study of codes over rings. Many properties of these codes are discovered. And various aspects of coding are dealt in the general setting of Galois rings instead of finite fields.\\
Linear codes over $\mathbb{Z}_{p^s}$ have been studied by several authors. Borges et al.\cite{BF} defined the Reed-Muller code over $\mathbb{Z}_{4}$, called the Quaternary Reed-Muller code, and showed that this code has similar properties as a Reed-Muller code over a finite field. Many authors had theoretical interest in codes over residue rings, and more generally over Galois rings.\\
Recently, Bhaintwal and Wasan \cite{BW1} treated the Generalized Reed-Muller (GRM) codes over $\mathbb{Z}_{p^s}$ for a prime power $p^s$. Our purpose is to present a generalization of GRM codes over $\mathbb{Z}_{p^s}$ to GRM codes over Galois rings.\\
This paper is organized as follows. In section 2, we give a multivariate approach for the GRM codes over Galois rings. In section 3, the standard generator matrix for a GRM code over a Galois ring is given. In section 4, we prove that the images of the GRM codes over Galois rings under the projection map are the usual GRM codes over finite fields. In section 5, the rank of a GRM code over a Galois ring is given. The section 6 will examine the trace descriptions of the Kerdock codes over Galois rings and the GRM codes over Galois rings. In section 7, the dual of a GRM code over a Galois ring is presented. And in section 8, the minimum distance of a GRM code over a Galois ring is provided.

\section{Multivariate representation}
Throughout this paper, we consider the Galois ring $GR(p^s,r)$ of characteristic $p^s$ and cardinality $p^{sr}$, where $p$ is a prime number and $s,r$ are integers $\geq 1$. $GR(p^s,r)$ is a local ring with maximal ideal $pGR(p^s,r)$ and residue field $GR(p^s,r)/pGR(p^s,r)=\mathbb{F}_{q}$, where $q=p^r$.\\
Let $h(x)\in GR(p^s,r)[x]$ be a monic basic primitive polynomial of degree $m\geq 1$ dividing $x^{q^m-1}-1$ and having $\xi$ as a root of order $q^m-1$ in the Galois ring $GR(p^s,r)[x]/(h(x))=GR(p^s,rm)$. $\xi$ is called a primitive element of $GR(p^s,rm)$. Let $n=q^m-1$ and
\begin{equation*}
\mathcal{T}_{m}=\{0,1,\xi,\xi^2,\ldots,\xi^{n-1}\}.
\end{equation*}
$\{1,\xi,\xi^2,\ldots,\xi^{m-1}\}$ is a basis of the free module $GR(p^s,rm)$ of rank $m$ over $GR(p^s,r)$, and we have $GR(p^s,rm)=GR(p^s,r)[\xi]$.\\
We fix the notations $\mathcal{R}=GR(p^s,rm)$ and $L=GR(p^s,r)$.\\
Each element $\xi^{i}\in \mathcal{T}_{m}$ can uniquely be expressed as

\begin{equation}\label{xipuis}
\xi^{i}=b_{1i}+b_{2i}\xi+b_{3i}\xi^2+\ldots+b_{mi}\xi^{m-1},
\end{equation}
where \;$b_{ji}\in L$, $0\leq i\leq n-1$, $1\leq j\leq m$.\\
We adopt the convention\; $\xi^{\infty}=0$.\\
Let
\begin{equation*}
b_i=(b_{1i},b_{2i},b_{3i},\ldots,b_{mi})\;,\; 0\leq i\leq n-1,
\end{equation*}
and\\
$b_{\infty}=(0,0,\ldots,0)$.\\
Let $X$ be the set of variables $x_1,x_2,\ldots,x_m$ and let $P(X)$ be a polynomial in these variables with coefficients from $L$. The degree of a nonzero monomial $x_1^{i_1}x_2^{i_2}\ldots x_m^{i_m}$ is $\sum_{k=1}^{m}i_k$ and the degree of a polynomial $P(X)$ is the largest degree of a monomial in $P(X)$. We define $\deg(0)=-\infty$.\\
We define the evaluation map
\begin{equation}\label{evalmap}
\begin{aligned}
ev:\quad L[X]&\longrightarrow L^{q^m}\\
       P(X)&\longmapsto
       (P(b_{\infty}),P(b_{0}),P(b_{1}),\ldots,P(b_{n-1}))
\end{aligned}
\end{equation}
Let
\begin{equation*}
S=\{P(X)\in L[X] \mid \deg_{x_i}(P(X))\leq q-1\;,\; 1\leq i\leq m\}.
\end{equation*}
Let $\nu$ be an integer such that $0\leq\nu\leq m(q-1)$. Then the $\nu$th order Generalized Reed-Muller code of length $q^m$ over $L$ is defined by
\begin{equation}
RM_{L}(\nu,m)=\{ev(P(X))\mid P(X)\in S\;,\; \deg(P(X))\leq\nu\}.
\end{equation}
The shortened Generalized Reed-Muller code of length $q^m-1$ and order $\nu$ over $L$ denoted by $RM_{L}(\nu,m)^{-}$ is the code obtained from $RM_{L}(\nu,m)$ by puncturing at the first position.

\section{Standard generator matrix}
The component-wise product of any two elements
$\textbf{u}=(u_{0},u_{1},\ldots,u_{n})$ and $\textbf{v}=(v_{0},v_{1},\ldots,v_{n})$ of $L^{n+1}$
is defined by
\begin{equation}\label{comprod}
\textbf{u}\textbf{v}=(u_{0}v_{0},u_{1}v_{1},\ldots,u_{n}v_{n}).
\end{equation}
By (\ref{xipuis}), let us consider the $(m+1)\times q^m$ matrix
\begin{equation*}
G:=\begin{pmatrix}
   1 & 1 & 1 & 1 & ... & 1 \\
   0 & 1 & \xi & \xi^{2} & ... & \xi^{n-1} \
   \end{pmatrix}.
\end{equation*}
$G$ can be expressed as
\begin{equation}\label{genmat}
G:=\begin{pmatrix}
   1 & 1 & 1 & 1 & ... & 1 \\
   0 & b_{10} & b_{11} & b_{12} & ... & b_{1n-1} \\
   0 & b_{20} & b_{21} & b_{22} & ... & b_{2n-1} \\
   ... & ... & ... &... & ... & ... \\
   0 & b_{m0} & b_{m1} & b_{m2} & ... & b_{mn-1} \
   \end{pmatrix}.
\end{equation}
The ith row of $G$ is denoted by $\textbf{v}_{i}$\;,\; $0\leq i\leq m$. Thus, the $\textbf{v}_{i}$ are $q^m$-tuples over $L$. In particular, $\textbf{v}_{0}$ is the all one tuple $1^{q^m}$. Let $\nu$ be an integer such that $0\leq\nu\leq m(q-1)$.\\
From section 2, each $P(X)\in S$ can be expressed as
\begin{equation*}
P(X)=\displaystyle{\sum_{0\leq i_{j}\leq
q-1}}a_{i_{1},...,i_{m}}x_{1}^{i_{1}}x_{2}^{i_{2}}\ldots
x_{m}^{i_{m}},
\end{equation*}
where $a_{i_{1},...,i_{m}}\in GR(p^{s},r)$.\\
By (\ref{evalmap}),(\ref{comprod}) and (\ref{genmat}), we have
\begin{equation*}
ev(x_1^{i_1}x_2^{i_2}\ldots x_m^{i_m})=\textbf{v}_1^{i_1}\textbf{v}_2^{i_2}\ldots \textbf{v}_m^{i_m}.
\end{equation*}
And since the map $ev$ is linear, we have
\begin{equation*}
ev(P(X))=\displaystyle{\sum_{0\leq i_{j}\leq
q-1}}a_{i_{1},...,i_{m}}\textbf{v}_{1}^{i_{1}}\textbf{v}_{2}^{i_{2}}\ldots
\textbf{v}_{m}^{i_{m}}
\end{equation*}
Then, the $\nu$th order Generalized Reed-Muller code $RM_{L}(\nu,m)$ of length $q^m$ over $L$ is defined to be the code generated by all tuples of the form
\begin{equation}\label{vectbase}
\textbf{v}_1^{i_1}\textbf{v}_2^{i_2}\ldots \textbf{v}_m^{i_m}\;,\;0\leq i_j\leq q-1\;,\; 1\leq j\leq m\;,\; \sum_{j=1}^{m}i_j\leq\nu.
\end{equation}
$RM_{L}(0,m)$ is the repetition code of length $q^m$ over $L$.\\
Let $G_{\nu}$ be the matrix whose rows consist of all tuples in (\ref{vectbase}). $G_{\nu}$ is called the standard generator matrix of $RM_{L}(\nu,m)$. The coordinates of any tulpe in $G_{\nu}$ are numbered $\infty,0,1,\ldots,n-1$.

\section{Projection map}
Recall that $q=p^r$, $n=q^m-1$ and $L=GR(p^s,r)$. Since $L/pL=\mathbb{F}_q$, consider the projection map which is defined by reduction modulo $p$
\begin{equation}
\begin{aligned}
\alpha:\quad L&\longrightarrow \mathbb{F}_{q}\\
       a&\longmapsto
       \bar{a}
\end{aligned}
\end{equation}
This map is extended to
\begin{equation*}
\begin{aligned}
\alpha:\quad L[x]&\longrightarrow \mathbb{F}_{q}[x]\\
       f(x)=\sum_{i}a_{i}x^i&\longmapsto
       \bar{f}(x)=\sum_{i}\bar{a}_{i}x^i
\end{aligned}
\end{equation*}
and
\begin{equation}\label{alfa}
\begin{aligned}
\alpha:\quad L^{q^m}&\longrightarrow (\mathbb{F}_{q})^{q^m}\\
       \textbf{v}=(a_{0},a_{1},\ldots,a_{n})&\longmapsto
       \bar{\textbf{v}}=(\bar{a}_{0},\bar{a}_{1},\ldots,\bar{a}_{n})
\end{aligned}
\end{equation}
\begin{prop}
We have $\alpha(RM_{L}(\nu,m))=RM_{\mathbb{F}_q}(\nu,m)$ where $RM_{\mathbb{F}_q}(\nu,m)$ is the usual GRM code of order $\nu$ ($0\leq\nu\leq m(q-1)$) and of length $q^m$ over the finite field $\mathbb{F}_q$\cite{BM}.
\end{prop}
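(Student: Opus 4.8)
The plan is to prove the equality by a double inclusion, the pivot being the commutativity of reduction with evaluation. Writing $\bar{P}(X)=\alpha(P)(X)$ for the polynomial obtained by reducing each coefficient of $P$ modulo $p$, the first thing I would establish is the identity $\alpha(ev(P(X)))=\overline{ev}(\bar{P}(X))$, where $\overline{ev}$ denotes the analogous evaluation map over $\mathbb{F}_q$, namely evaluation at the reduced points $\bar{b}_{\infty},\bar{b}_0,\ldots,\bar{b}_{n-1}$. Since $\alpha\colon L\to\mathbb{F}_q$ is a ring homomorphism, evaluating $P$ at $b_i$ and then reducing equals substituting $\bar{b}_i$ into $\bar{P}$, so this identity is essentially formal once the evaluation points are matched up.

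The decisive structural fact, which I would isolate as a preliminary step, is that the reduced points $\bar{b}_{\infty},\bar{b}_0,\ldots,\bar{b}_{n-1}$ run exactly once through all of $(\mathbb{F}_q)^m$. Indeed, $\{1,\xi,\ldots,\xi^{m-1}\}$ is a free basis of $\mathcal{R}=GR(p^s,rm)$ over $L$, and its reduction $\{1,\bar{\xi},\ldots,\bar{\xi}^{m-1}\}$ is a basis of the residue field $\mathbb{F}_{q^m}$ over $\mathbb{F}_q$; moreover $\bar{\xi}$ has multiplicative order $q^m-1=n$, so $\bar{\xi}^0,\bar{\xi}^1,\ldots,\bar{\xi}^{n-1}$ are the nonzero elements of $\mathbb{F}_{q^m}$ and, together with $\bar{\xi}^{\infty}=0$, exhaust $\mathbb{F}_{q^m}$. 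Reading the relation (\ref{xipuis}) modulo $p$ shows that $\bar{b}_i$ is precisely the coordinate vector of $\bar{\xi}^i$ in this basis, so the $\bar{b}_i$ form a bijective enumeration of $(\mathbb{F}_q)^m$. Consequently $\overline{ev}$ is exactly the evaluation map defining the usual GRM code $RM_{\mathbb{F}_q}(\nu,m)$ of \cite{BM}, up to the fixed ordering of the coordinates.

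With these two facts in hand the forward inclusion is immediate: for $P(X)\in S$ with $\deg(P)\leq\nu$, the reduced polynomial $\bar{P}$ still satisfies $\deg_{x_i}(\bar{P})\leq q-1$ and $\deg(\bar{P})\leq\nu$ (reduction cannot raise the degree), so $\alpha(ev(P(X)))=\overline{ev}(\bar{P}(X))$ lies in $RM_{\mathbb{F}_q}(\nu,m)$. For the reverse inclusion I would use that $\alpha$ is surjective: given any $\bar{Q}\in\mathbb{F}_q[X]$ with $\deg_{x_i}(\bar{Q})\leq q-1$ and $\deg(\bar{Q})\leq\nu$, I lift each coefficient to $L$ monomial by monomial, obtaining $Q\in S$ with $\alpha(Q)=\bar{Q}$ and $\deg(Q)=\deg(\bar{Q})\leq\nu$ (a lift of a nonzero field element is a nonzero ring element, so the degree is preserved). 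Then $ev(Q(X))\in RM_{L}(\nu,m)$ and $\alpha(ev(Q(X)))=\overline{ev}(\bar{Q}(X))$ recovers the given codeword.

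I expect the main obstacle to be the preliminary step rather than the inclusions themselves: one must verify carefully that reduction modulo $p$ carries the Teichm\"uller enumeration $b_{\infty},b_0,\ldots,b_{n-1}$ onto all of $(\mathbb{F}_q)^m$ bijectively, and that this reduced enumeration coincides, as an indexed list of evaluation points, with the one used to define $RM_{\mathbb{F}_q}(\nu,m)$ in \cite{BM}. Everything else is bookkeeping resting on the ring-homomorphism property of $\alpha$ and its surjectivity.
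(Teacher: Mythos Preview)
Your argument is correct. Both your proof and the paper's rest on the same underlying fact---that $\alpha$ is a ring homomorphism and hence commutes with the construction of codewords---but you organise it differently. The paper works directly with the generating tuples: it observes that $\alpha(\textbf{v}_1^{i_1}\cdots\textbf{v}_m^{i_m})=\bar{\textbf{v}}_1^{i_1}\cdots\bar{\textbf{v}}_m^{i_m}$ (because $\alpha$ respects component-wise products) and then simply cites the known fact that these reduced tuples form a basis of $RM_{\mathbb{F}_q}(\nu,m)$, which handles both inclusions at once. You instead work at the level of the evaluation map, proving $\alpha\circ ev=\overline{ev}\circ\alpha$ and then arguing each inclusion separately via degree preservation and coefficient lifting. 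Your version is more self-contained: in particular, you spell out why the reduced points $\bar b_{\infty},\bar b_0,\ldots,\bar b_{n-1}$ enumerate all of $(\mathbb{F}_q)^m$, a point the paper absorbs into its citation of the classical theory. The paper's route is shorter; yours makes the identification of evaluation points explicit and would survive without an external reference.
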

\begin{proof}
By (\ref{alfa}) and (\ref{vectbase}), we have
\begin{equation*}
\alpha(\textbf{v}_1^{i_1}\textbf{v}_2^{i_2}\ldots \textbf{v}_m^{i_m})=\bar{\textbf{v}}_1^{i_1}\bar{\textbf{v}}_2^{i_2}\ldots \bar{\textbf{v}}_m^{i_m}
\end{equation*}
and the $q^m$-tuples
\begin{equation*}
\bar{\textbf{v}}_1^{i_1}\bar{\textbf{v}}_2^{i_2}\ldots \bar{\textbf{v}}_m^{i_m}
\end{equation*}
where $0\leq i_j\leq q-1$, $1\leq j\leq m$, $\sum_{j=1}^{m}i_j\leq\nu$ form a basis for the GRM code $RM_{\mathbb{F}_q}(\nu,m)$. Hence the result.
\end{proof}

\section{Rank}
Consider the $q^m$-tuples
\begin{equation}\label{totalbase}
\textbf{v}_{1}^{i_1}\textbf{v}_2^{i_2}\ldots \textbf{v}_m^{i_m}\;,\; 0\leq i_j\leq q-1\;,\; 1\leq j\leq m.
\end{equation}
\begin{prop}
The $q^m$-tuples in (\ref{totalbase}) form a basis for the free $L$-module $L^{q^m}$ where $L=GR(p^{s},r)$.
\end{prop}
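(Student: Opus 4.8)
The plan is to recognize that there are exactly $q^m$ tuples listed in (\ref{totalbase}) (each exponent $i_j$ ranges over the $q$ values $0,1,\ldots,q-1$, and there are $m$ of them), which matches the rank $q^m$ of the free module $L^{q^m}$. So I would arrange these tuples as the rows of a square matrix $M$ of size $q^m \times q^m$ over $L$, and reduce the problem to showing that $M$ is invertible. Over a commutative ring, a family of $q^m$ elements of $L^{q^m}$ is a basis precisely when the associated square matrix is invertible, i.e.\ when $\det(M)$ is a unit of $L$; this is the correct substitute for the ``linear independence plus correct count'' argument that would suffice over a field but fails over a general ring.

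The key step is to detect the unit property through the local structure of $L$. Since $L = GR(p^s,r)$ is a local ring with maximal ideal $pL$ and residue field $L/pL = \mathbb{F}_q$, an element of $L$ is a unit if and only if it lies outside $pL$, equivalently if and only if its image under the projection $\alpha$ of (\ref{alfa}) is nonzero in $\mathbb{F}_q$. Because $\alpha$ is a ring homomorphism applied coordinatewise, it commutes with the determinant, so $\alpha(\det M) = \det(\bar{M})$, where $\bar{M}$ is the matrix whose rows are the reduced tuples $\bar{\textbf{v}}_1^{i_1}\bar{\textbf{v}}_2^{i_2}\ldots\bar{\textbf{v}}_m^{i_m}$ with $0 \leq i_j \leq q-1$. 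Thus it suffices to show $\det(\bar{M}) \neq 0$ in $\mathbb{F}_q$.

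Finally, I would invoke the classical situation over the finite field. The rows of $\bar{M}$ are exactly the evaluations of all reduced monomials $x_1^{i_1}\ldots x_m^{i_m}$ (with each exponent at most $q-1$) at the $q^m$ points obtained by reducing $\mathcal{T}_m$ modulo $p$, which range over all of $\mathbb{F}_q^m$; equivalently these tuples span $RM_{\mathbb{F}_q}(m(q-1),m) = \mathbb{F}_q^{q^m}$, the full space. Since the evaluation map from reduced polynomials to functions $\mathbb{F}_q^m \to \mathbb{F}_q$ is a bijection, these $q^m$ tuples are $\mathbb{F}_q$-linearly independent, so $\bar{M}$ is invertible and $\det(\bar{M}) \neq 0$. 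Consequently $\det(M)$ is a unit of $L$, $M$ is invertible over $L$, and the tuples in (\ref{totalbase}) form a basis of $L^{q^m}$. I expect the main obstacle to be precisely the conceptual passage from the field to the ring: one must resist concluding from linear independence alone and instead exploit that, over the local ring $L$, invertibility of the coefficient matrix is faithfully detected by its reduction modulo $p$; the field-level independence itself is standard.
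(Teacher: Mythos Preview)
Your proof is correct, and it reaches the same endpoint as the paper but by a genuinely different route. The paper argues by \emph{successive $p$-adic approximation}: given $\textbf{v}\in L^{q^m}$, it reduces modulo $p$, expresses $\bar{\textbf{v}}$ in the known $\mathbb{F}_q$-basis $\{\bar{\textbf{v}}_1^{i_1}\cdots\bar{\textbf{v}}_m^{i_m}\}$, lifts the coefficients to $L$ to obtain $\textbf{v}=\sum a^{(0)}\textbf{v}_1^{i_1}\cdots\textbf{v}_m^{i_m}+p\textbf{u}_1$, and iterates on $\textbf{u}_1,\textbf{u}_2,\ldots$ until $p^s=0$ kills the remainder; having shown the tuples span, it then invokes the (nontrivial) fact that $q^m$ generators of a free module of rank $q^m$ over a commutative ring must be a basis. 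Your determinant argument bypasses the iteration entirely: you pass to the residue field once, use the classical result that $\det(\bar M)\neq 0$, and lift invertibility back to $L$ via the local-ring unit criterion. Your approach is cleaner and handles spanning and independence simultaneously, while the paper's approach is more constructive in that it exhibits the coefficients explicitly; both ultimately rest on the same field-level input, namely that the reduced monomial evaluations form a basis of $(\mathbb{F}_q)^{q^m}$.
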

\begin{proof}
Let $\bar{\textbf{v}}_i$ be the image of $\textbf{v}_i$ in $(\mathbb{F}_q)^{q^m}$, $0\leq i\leq m$. From the theory of GRM codes over finite fields, we know that the vectors
\begin{equation*}
\bar{\textsc{}\textbf{v}}_{1}^{i_1}\bar{\textbf{v}}_2^{i_2}\ldots \bar{\textbf{v}}_m^{i_m}\;,\; 0\leq i_j\leq q-1\;,\; 1\leq j\leq m
\end{equation*}
form a basis for $(\mathbb{F}_q)^{q^m}$ over $\mathbb{F}_q$.\\
Let $\textbf{v}\in L^{q^m}$. Then $\bar{\textbf{v}}\in (\mathbb{F}_q)^{q^m}$, and there exist constants $a_{i_{1},...,i_{m}}^{(0)}\in \mathbb{F}_q$ such that
\begin{equation*}
\bar{\textbf{v}}=\displaystyle{\sum_{0\leq i_{j}\leq
q-1}}a_{i_{1},...,i_{m}}^{(0)}\bar{\textbf{v}}_{1}^{i_{1}}\bar{\textbf{v}}_{2}^{i_{2}}\ldots
\bar{\textbf{v}}_{m}^{i_{m}}.
\end{equation*}
Then we have
\begin{equation*}
\textbf{v}=\displaystyle{\sum_{0\leq i_{j}\leq
q-1}}a_{i_{1},...,i_{m}}^{'(0)}\textbf{v}_{1}^{i_{1}}\textbf{v}_{2}^{i_{2}}\ldots
\textbf{v}_{m}^{i_{m}}+p\textbf{u}_1
\end{equation*}
for some $\textbf{u}_1\in L^{q^m}$ and $a_{i_{1},...,i_{m}}^{'(0)}\in L$.\\
There exist constants $a_{i_{1},...,i_{m}}^{'(1)}\in L$ such that
\begin{equation*}
\textbf{u}_1=\displaystyle{\sum_{0\leq i_{j}\leq
q-1}}a_{i_{1},...,i_{m}}^{'(1)}\textbf{v}_{1}^{i_{1}}\textbf{v}_{2}^{i_{2}}\ldots
\textbf{v}_{m}^{i_{m}}+p\textbf{u}_2
\end{equation*}
for some $\textbf{u}_2\in L^{q^m}$.\\
Continuing in this way and noting that $p^s=0$ in $L$, we get constants\\ $a_{i_{1},...,i_{m}}^{'(2)},\ldots,a_{i_{1},...,i_{m}}^{'(s-1)}\in L$ such that
\begin{equation*}
\textbf{v}=\displaystyle{\sum_{0\leq i_{j}\leq q-1}}(a_{i_{1},...,i_{m}}^{'(0)}+pa_{i_{1},...,i_{m}}^{'(1)}+\ldots+p^{s-1}a_{i_{1},...,i_{m}}^{'(s-1)})\textbf{v}_{1}^{i_{1}}\textbf{v}_{2}^{i_{2}}\ldots
\textbf{v}_{m}^{i_{m}}.
\end{equation*}
Hence, each $\textbf{v}\in L^{q^m}$ can be expressed as a linear combination of the tuples $\textbf{v}_{1}^{i_{1}}\textbf{v}_{2}^{i_{2}}\ldots\textbf{v}_{m}^{i_{m}}$\;,\; $0\leq i_{j}\leq q-1$\;,\; $1\leq j\leq m$. Since these tuples are $q^m$ in number and $L^{q^m}$ is a free module of rank $q^m$ over $L$, they must form a basis for $L^{q^m}$.
\end{proof}
\begin{thm}
Let $m$ be a positive integer such that $rm\geq s$ and $q=p^r$. Then, for $0\leq\nu\leq m(q-1)$, the GRM code $RM_{L}(\nu,m)$ is a free $L$-module of rank $k$, where
\begin{equation*}
k=\sum_{i=0}^{\nu}\sum_{j=0}^{m}(-1)^j\binom{m}{j}\binom{i-jq+m-1}{i-jq}.
\end{equation*}
\end{thm}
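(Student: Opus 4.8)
The plan is to read off both assertions---freeness and the value of the rank---directly from the Proposition just established, which states that the full family of tuples $\textbf{v}_1^{i_1}\textbf{v}_2^{i_2}\ldots\textbf{v}_m^{i_m}$ with $0\le i_j\le q-1$ is a basis of the free $L$-module $L^{q^m}$. First I would recall that, by its definition in Section 3, the code $RM_L(\nu,m)$ is the submodule of $L^{q^m}$ generated by precisely those tuples in (\ref{vectbase}), namely the ones carrying the extra degree restriction $\sum_{j=1}^{m} i_j\le\nu$. Thus the chosen generating family of $RM_L(\nu,m)$ is a \emph{subset} of the basis of $L^{q^m}$ supplied by that Proposition.

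Next I would invoke the elementary fact that over any commutative ring a subset of a basis of a free module is $L$-linearly independent. Consequently the generators of $RM_L(\nu,m)$ are linearly independent, and a linearly independent generating family of a submodule is automatically a basis of that submodule. Hence $RM_L(\nu,m)$ is a free $L$-module whose rank equals the number of tuples in (\ref{vectbase}); equivalently, the number of exponent vectors $(i_1,\ldots,i_m)$ satisfying $0\le i_j\le q-1$ for every $j$ and $\sum_{j=1}^{m} i_j\le\nu$. As a consistency check, reducing modulo $p$ through the Proposition of Section 4 sends these generators to a basis of $RM_{\mathbb{F}_q}(\nu,m)$, so this number is also $\dim_{\mathbb{F}_q}RM_{\mathbb{F}_q}(\nu,m)$.

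It then remains only to evaluate this count, the single computational step. Writing the rank as $\sum_{i=0}^{\nu}c_i$, where $c_i$ is the number of exponent vectors with $0\le i_j\le q-1$ and $\sum_{j=1}^{m} i_j=i$, I would compute $c_i$ by inclusion-exclusion on the constraints $i_j\le q-1$: the number of nonnegative solutions of $i_1+\cdots+i_m=i$ is $\binom{i+m-1}{i}$ by stars and bars, and forcing a chosen set of $j$ variables to satisfy $i_\ell\ge q$ (via the substitution $i_\ell\mapsto i_\ell-q$ on those variables) contributes $\binom{i-jq+m-1}{i-jq}$ solutions. This yields $c_i=\sum_{j=0}^{m}(-1)^j\binom{m}{j}\binom{i-jq+m-1}{i-jq}$, with the convention that a binomial coefficient with negative lower argument vanishes, and summing over $0\le i\le\nu$ gives the stated formula for $k$.

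I do not anticipate a genuine obstacle here: the structural content (freeness, and rank equal to the number of generators) is immediate once the basis Proposition is available, and the remaining identity is a standard stars-and-bars inclusion-exclusion. The one point that needs care is the bookkeeping of the inclusion-exclusion, in particular the convention making $\binom{i-jq+m-1}{i-jq}$ vanish for $i<jq$ so that each $c_i$ is correctly truncated; the standing hypothesis $rm\ge s$ enters only through the basis Proposition invoked at the outset.
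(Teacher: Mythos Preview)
Your argument is correct and follows essentially the same route as the paper: both recognize that the generating tuples in (\ref{vectbase}) form a subset of the basis of $L^{q^m}$ supplied by the preceding Proposition, hence are linearly independent, so $RM_L(\nu,m)$ is free with rank equal to the cardinality of that subset. The only difference is in how the rank is evaluated: the paper passes through the projection $\alpha$ to identify the rank with $\dim_{\mathbb{F}_q}RM_{\mathbb{F}_q}(\nu,m)$ and then quotes the classical formula from the finite-field theory, whereas you compute the count directly by stars-and-bars plus inclusion--exclusion. Your version is thus slightly more self-contained, while the paper's version makes explicit the link $\rank RM_L(\nu,m)=\rank RM_{\mathbb{F}_q}(\nu,m)$ that is used elsewhere.
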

\begin{proof}
By (\ref{vectbase}), the elements of the set
\begin{equation}
B=\{\textbf{v}_{1}^{i_{1}}\textbf{v}_{2}^{i_{2}}\ldots\textbf{v}_{m}^{i_{m}}\mid 0\leq i_j\leq q-1\;,\; \sum_{j=1}^{m}i_j\leq\nu\}
\end{equation}
span the GRM code $RM_{L}(\nu,m)$.
Since $B$ is a subset of the set $\{\textbf{v}_{1}^{i_{1}}\textbf{v}_{2}^{i_{2}}\ldots\textbf{v}_{m}^{i_{m}}\mid 0\leq i_j\leq q-1\}$ which forms a basis for the free $L$-module $L^{q^m}$, then $B$ must be linearly independent. Thus, $B$ is a basis for $RM_{L}(\nu,m)$ and hence $RM_{L}(\nu,m)$ is a free module over $L$. The images of the elements in $B$ under the map $\alpha$ generate the GRM code $RM_{\mathbb{F}_q}(\nu,m)$ over $\mathbb{F}_q$. Also, these images are linearly independent over $\mathbb{F}_q$. Therefore, the elements $\bar{\textbf{v}}$, where $\textbf{v}\in B$, form a basis for $RM_{\mathbb{F}_q}(\nu,m)=\alpha(RM_{L}(\nu,m))$, and we have $\rank RM_{L}(\nu,m)=\rank RM_{\mathbb{F}_q}(\nu,m)$. It is known from the theory of GRM codes over finite fields that
\begin{equation*}
\rank RM_{\mathbb{F}_q}(\nu,m)=\sum_{i=0}^{\nu}\sum_{j=0}^{m}(-1)^j\binom{m}{j}\binom{i-jq+m-1}{i-jq}.
\end{equation*}
Hence the result.
\end{proof}
\begin{rem}\label{cellnumber}
The rank of the GRM code $RM_{L}(\nu,m)$ is just the number of ways we can place $\nu$ or fewer objects in $m$ cells where no cell is to contain more than $q-1$ objects.
\end{rem}

\section{Trace descriptions}
Each element $c\in\mathcal{R}=GR(p^{s},rm)$ has a unique $p$-adic representation
\begin{equation}
c=\xi_{0}+p\xi_{1}+p^2\xi_{2}+\ldots+p^{s-1}\xi_{s-1}
\end{equation}
where $\xi_{0},\xi_{1},\xi_{2},\ldots,\xi_{s-1}\in \mathcal{T}_{m}=\{0,1,\xi,\xi^2,\ldots,\xi^{n-1}\}$. Under this representation, the Frobenius automorphism is defined by
\begin{equation*}
\begin{aligned}
f:\quad \mathcal{R}&\longrightarrow \mathcal{R}\\
        c=\xi_{0}+p\xi_{1}+p^2\xi_{2}+\ldots+p^{s-1}\xi_{s-1}&\longmapsto
       c^{f}=\xi_{0}^q+p\xi_{1}^q+p^2\xi_{2}+\ldots+p^{s-1}\xi_{s-1}^q
\end{aligned}
\end{equation*}
where $q=p^r$.
$f$ is an automorphism of $\mathcal{R}$, fixes only elements of $L=GR(p^{s},r)$, and generates the group automorphism of $\mathcal{R}$, which is cyclic of order $m$. Note that when $s=1$, $f$ is the usual Frobenius automorphism for $\mathbb{F}_{q^m}$.\\
The relative trace map is defined by
\begin{equation*}
\begin{aligned}
T:\quad \mathcal{R}&\longrightarrow L\\
        c&\longmapsto
       T(c)=c+c^{f}+c^{f^2}+\ldots+c^{f^{m-1}}.
\end{aligned}
\end{equation*}
$T$ is a linear transformation over $L$.

\subsection{Kerdock codes over L}
Let $m$ be a positive integer such that $rm\geq s$ and $n=q^m-1$ with $q=p^r$. Let $h(x)\in L[x]$ be a monic basic primitive polynomial of degree $m$ dividing $x^n-1$ and having $\xi$ as a root of order $n$ in $\mathcal{R}$. Let $g(x)$ be the reciprocal polynomial of $(x^n-1)/((x-1)h(x))$. The shortened Kerdock code $\mathcal{K}^{-}$ is the cyclic code of length $q^m-1$ over $L=GR(p^{s},r)$ with generator polynomial $g(x)$.\\
Since $g(x)\mid x^n-1$, $\mathcal{K}^{-}$ is a free cyclic code of rank $n-\deg g(x)=m+1$ over $L$. A generator matrix of $\mathcal{K}^{-}$ is
\begin{equation*}
G^{-}:=\begin{pmatrix}
   1 & 1 & 1 & ... & 1 \\
   1 & \xi & \xi^{2} & ... & \xi^{n-1} \
   \end{pmatrix}.
\end{equation*}
The Kerdock code $\mathcal{K}$ of length $q^m$ over $L$ is obtained by adding an overall parity-check to $\mathcal{K}^{-}$.\\
Since $rm\geq s$ and $\sum_{i=0}^{n-1}\xi^i=0$, the zero-sum check for the first row of $G^{-}$ is $1$ and for the second row, it is $0$. Thus, a generator matrix for $\mathcal{K}$ is
\begin{equation*}
G:=\begin{pmatrix}
   1 & 1 & 1 & 1 & ... & 1 \\
   0 & 1 & \xi & \xi^{2} & ... & \xi^{n-1} \
   \end{pmatrix}
\end{equation*}
where the elements in the second row of $G$ are considered to be $m$-tuples over $L$. Thus we have $\mathcal{K}=RM_{L}(1,m)$.
\begin{thm}\label{tracedesc}
Let $m$ be a positive integer such that $rm\geq s$ and $n=q^m-1$ with $q=p^r$. Let $\xi$ be a primitive element of $\mathcal{R}=GR(p^{s},rm)$. Then $\mathcal{K}^{-}$ and $\mathcal{K}$ have the following trace descriptions over $\mathcal{R}$\\
1. $\mathcal{K}^{-}=\{\epsilon 1^n+v^{(\lambda)}\mid \epsilon\in L\;,\;\lambda\in\mathcal{R}\}$, where $1^n$ is the all one tuple of length $n$ and
\begin{equation*}
v^{(\lambda)}=(T(\lambda),T(\lambda \xi),T(\lambda\xi^2),\ldots,T(\lambda\xi^{n-1})),
\end{equation*}
2. $\mathcal{K}=\{\epsilon 1^{n+1}+u^{(\lambda)}\mid \epsilon\in L\;,\;\lambda\in\mathcal{R}\}$, where
\begin{equation*}
u^{(\lambda)}=(0,T(\lambda),T(\lambda \xi),T(\lambda\xi^2),\ldots,T(\lambda\xi^{n-1})).
\end{equation*}
\end{thm}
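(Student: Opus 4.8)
The plan is to establish both trace descriptions by showing that the proposed sets coincide with the row-spaces of the respective generator matrices. Since $\mathcal{K}^{-}$ is the cyclic code generated by the rows of $G^{-}$, namely $1^n$ and $(1,\xi,\xi^2,\ldots,\xi^{n-1})$ viewed as $L$-combinations, I would first make precise the identification between $\mathcal{R}$-scalars and $L$-linear combinations. The key observation is that $\{1,\xi,\xi^2,\ldots,\xi^{m-1}\}$ is a free basis of $\mathcal{R}$ over $L$, so multiplying the vector $(\xi^i)_{i=0}^{n-1}$ by a fixed $\lambda\in\mathcal{R}$ and then taking $L$-coordinates should reproduce exactly the codewords built from the bottom $m$ rows of $G$.

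First I would prove part 1. I would set up the evaluation-style map $\lambda\mapsto v^{(\lambda)}=(T(\lambda\xi^j))_{j=0}^{n-1}$ and verify it is $L$-linear, which follows because $T$ is $L$-linear (as stated in the excerpt) and $\lambda\mapsto\lambda\xi^j$ is $L$-linear. The heart of the matter is to show that $\{v^{(\lambda)}\mid\lambda\in\mathcal{R}\}$ together with the repetition vectors $\epsilon 1^n$ spans exactly $\mathcal{K}^{-}$. For one inclusion, I would argue that each $v^{(\lambda)}$ lies in the $L$-span of the rows of $G^{-}$: writing $\lambda=\sum_{t=0}^{m-1}\lambda_t\xi^t$ with $\lambda_t\in L$ and using that $T$ applied coordinatewise to $(\xi^{j})_j$ recovers $L$-linear functionals of the coordinate vectors $b_j$, one sees $v^{(\lambda)}$ is a genuine $L$-combination of the second-row data. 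For the reverse inclusion and to get the full code, I would count ranks: the map $\lambda\mapsto v^{(\lambda)}$ has image of $L$-rank $m$ because $T$ is surjective onto $L$ and the Frobenius $f$ fixes exactly $L$ (so the trace form is nondegenerate on the free module), and adding the independent direction $1^n$ gives rank $m+1$, matching $\operatorname{rank}\mathcal{K}^{-}=m+1$ established above. Equality of a free submodule of the correct rank inside the free code $\mathcal{K}^{-}$ then forces the two to coincide.

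For part 2, I would leverage that $\mathcal{K}$ is obtained from $\mathcal{K}^{-}$ by prepending the overall parity-check coordinate, and that the zero-sum check gives a leading $0$ for the $\xi$-row and a leading $1$ for the all-ones row (exactly as computed in the paragraph before the theorem, using $rm\geq s$ and $\sum_{i=0}^{n-1}\xi^i=0$). Concretely, $u^{(\lambda)}$ is $v^{(\lambda)}$ with a $0$ prepended, and the parity extension sends $1^n$ to $1^{n+1}$; thus the description for $\mathcal{K}$ is the image of the description for $\mathcal{K}^{-}$ under the parity-extension map, and the same rank bookkeeping (rank $m+1$ matching $\mathcal{K}=RM_L(1,m)$) closes the argument.

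The main obstacle I anticipate is the rank/nondegeneracy step: over a finite field the trace form is automatically nondegenerate, but over the Galois ring $\mathcal{R}$ I must confirm that $\lambda\mapsto(T(\lambda\xi^j))_j$ has free image of rank exactly $m$ rather than merely generating something of smaller rank or failing to be free. I would handle this by reducing modulo $p$: under the projection to the residue field the map becomes the classical trace evaluation for $\mathbb{F}_{q^m}/\mathbb{F}_q$, whose nondegeneracy is standard, and then lift back using that $\mathcal{R}$ and $\mathcal{K}^{-}$ are free $L$-modules (so a set whose reduction is a basis is itself a basis, exactly as in the proof of Proposition for the rank).
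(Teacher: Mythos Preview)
Your proposal is correct, but it takes a different route from the paper's proof for the inclusion $\mathcal{C}\subseteq\mathcal{K}^{-}$. The paper exploits the cyclic structure directly: since $\mathcal{K}^{-}$ has check polynomial $(1-x)h^{*}(x)$, it suffices to observe that $1-x$ annihilates $\epsilon 1^{n}$ and that $h^{*}(x)$ annihilates $v^{(\lambda)}$ (because $h(\xi)=0$), whence every element of $\mathcal{C}$ is annihilated by the check polynomial and so lies in $\mathcal{K}^{-}$. Your argument instead stays on the linear-algebra side, expressing the functional $x\mapsto T(\lambda x)$ in the dual basis of coordinate projections to write $v^{(\lambda)}$ explicitly as an $L$-combination of the lower $m$ rows of $G^{-}$. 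Both are valid; the paper's check-polynomial argument is shorter and uses nothing beyond $h(\xi)=0$, while your approach avoids invoking the cyclic-code machinery but requires the observation that $\operatorname{Hom}_{L}(\mathcal{R},L)$ is free on the coordinate functionals.

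For the reverse inclusion, both proofs ultimately reduce to a size count: the paper simply asserts $\card\mathcal{C}=(p^{sr})^{m+1}$, which is exactly the injectivity of $(\epsilon,\lambda)\mapsto\epsilon 1^{n}+v^{(\lambda)}$ that you isolate and justify via reduction modulo $p$ and lifting. So you are in fact filling in a step the paper leaves implicit. One small point to make precise in your write-up: when you conclude that a free rank-$(m+1)$ submodule of the free rank-$(m+1)$ module $\mathcal{K}^{-}$ must equal $\mathcal{K}^{-}$, this is not automatic over a general ring; here it follows because $L$ is finite, so equal rank forces equal cardinality. Your treatment of part 2 via the parity extension matches the paper's.
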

\begin{proof}
1. Let $\mathcal{C}=\{\epsilon 1^n+v^{(\lambda)}\mid \epsilon\in L\;,\;\lambda\in\mathcal{R}\}$. Let $h(x)$ be the monic basic primitive polynomial of degree $m$ in $L(x)$ dividing $x^n-1$ such that $h(\xi)=0$. Let $h^{*}(x)$ be the reciprocal polynomial of $h(x)$, i.e. $h^{*}(x)=x^{m}h(\frac{1}{x})$. From the definition of $\mathcal{K}^{-}$, the check polynomial of $\mathcal{K}^{-}$ is $(1-x)h^{*}(x)$. Clearly, $1-x$ annihilates the tuple $\epsilon 1^n$ and $h^{*}(x)$ annihilates $v^{(\lambda)}$. Thus $(1-x)h^{*}(x)$ annihilates $\mathcal{C}$. It follows that $\mathcal{C}\subseteq \mathcal{K}^{-}$. On the other hand, we have $\card \mathcal{C}=\card \mathcal{K}^{-}=(p^{sr})^{m+1}$.\\
2. $\mathcal{K}$ is a parity check extension of $\mathcal{K}^{-}$ and the zero-sum check for $\epsilon 1^n$ is $\epsilon$ and the zero-sum check for $v^{(\lambda)}$ is $0$. Thus, the result follows from 1.
\end{proof}

\subsection{GRM codes over L}
Let $k$ be any integer such that $0\leq k\leq q^m-1$, where $q=p^r$. Then, $k$ can uniquely be expressed as
\begin{equation*}
k=\sum_{i=0}^{m-1}a_{i}q^{i}\;\;,\;\;0\leq a_i\leq q-1.
\end{equation*}
The $q$-weight of $k$ is defined by
\begin{equation*}
w_{q}(k)=\sum_{i=0}^{m-1}a_{i}.
\end{equation*}
From Remark \ref{cellnumber}, we have also
\begin{rem}
$\rank RM_{L}(\nu,m)=\card (\{j\mid 0\leq j\leq q^m-2\;,\;w_{q}(j)\leq\nu\})$.
\end{rem}
\begin{thm}\label{tracedescri}
Let $m$ be a positive integer such that $rm\geq s$ and $n=q^m-1$ with $q=p^r$. Let $1\leq\nu\leq m(q-1)$. Then, $RM_{L}(\nu,m)$ is generated by the repetition code $RM_{L}(0,m)$ together with all $q^m$-tuples of the form
\begin{equation}\label{vectrace}
(0,T(\lambda_{j}),T(\lambda_{j} \xi^{j}),T(\lambda_{j}\xi^{2j}),\ldots,T(\lambda_{j}\xi^{(n-1)j}))
\end{equation}
where $j$ ranges over a system of representatives of those cyclotomic cosets modulo $q^m-1$ for which $w_{q}(j)\leq\nu$ and $\lambda_{j}$ ranges over $\mathcal{R}=GR(p^{s},rm)$.
\end{thm}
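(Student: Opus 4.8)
The plan is to let $C$ denote the code generated by $RM_{L}(0,m)$ together with the tuples \eqref{vectrace} for the relevant $j$, to prove the inclusion $C\subseteq RM_{L}(\nu,m)$ first, and then to upgrade it to an equality by reducing modulo $p$ and lifting. Throughout I read the index $j$ as running over the \emph{nonzero} cyclotomic cosets with $w_{q}(j)\le\nu$; the coset of $0$ contributes only constants and is already accounted for by the repetition code $RM_{L}(0,m)$. In particular every such $j$ satisfies $j\ge 1$, hence $w_{q}(j)\ge 1$, which will match the leading entry $0$ of \eqref{vectrace}.

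\textbf{Inclusion via Frobenius linearity.} For the inclusion I would express each tuple \eqref{vectrace} as an evaluation of a polynomial of controlled degree. The crucial structural fact is that the Frobenius automorphism $f$ is $L$-linear and restricts to the $q$-th power map on $\mathcal{T}_{m}$; writing $y=\sum_{k=1}^{m}Y_{k}(y)\xi^{k-1}$ with $Y_{k}(y)\in L$, this gives $y^{q^{t}}=y^{f^{t}}=\sum_{k=1}^{m}Y_{k}(y)\,\xi^{(k-1)q^{t}}$ for $y\in\mathcal{T}_{m}$, so the coordinates of $y^{q^{t}}$ are $L$-linear forms in $Y_{1}(y),\ldots,Y_{m}(y)$. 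Expanding $j=\sum_{t}a_{t}q^{t}$ and $y^{j}=\prod_{t}(y^{q^{t}})^{a_{t}}$, the coordinates of $y^{j}$ become polynomials of total degree $\le\sum_{t}a_{t}=w_{q}(j)$ in the $Y_{k}(y)$, and since $T$ is $L$-linear the function $\xi^{l}\mapsto T(\lambda\xi^{lj})=T(\lambda(\xi^{l})^{j})$ is the evaluation at the points $b_{l}$ of a polynomial $Q(x_{1},\ldots,x_{m})$ of total degree $\le w_{q}(j)\le\nu$. Because $w_{q}(j)\ge 1$, $Q$ has no constant term, so $Q(b_{\infty})=0$, matching the first entry of \eqref{vectrace}; thus the tuple equals $ev(Q)$.

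\textbf{The main obstacle.} The delicate point is that $Q$ need not lie in $S$, since some $\deg_{x_{i}}Q$ may exceed $q-1$. To conclude $ev(Q)\in RM_{L}(\nu,m)$ I must pass to the unique $P\in S$ with $ev(P)=ev(Q)$ --- which exists because $ev\colon S\to L^{q^{m}}$ is an isomorphism by Proposition 5.1 --- and verify that this reduction does not push the total degree above $\nu$. Over $\mathbb{F}_{q}$ the reduction is the familiar $x_{i}^{q}\mapsto x_{i}$, which lowers degree; over $L$ the coordinate functions no longer satisfy $Y_{i}^{q}=Y_{i}$, so one must instead analyse the vanishing ideal of the Teichm\"uller point set $\{b_{\infty},b_{0},\ldots,b_{n-1}\}\subseteq L^{m}$ and show that reduction modulo it is total-degree non-increasing. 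I expect this to be the technical heart of the argument; concretely it should follow from exhibiting, for each $i$, a relation $x_{i}^{q}\equiv h_{i}(X)$ valid on $\mathcal{T}_{m}$ with $\deg h_{i}\le q$ and $\deg_{x_{i}}h_{i}\le q-1$, which both lowers the $x_{i}$-degree and preserves total degree (small cases confirm that such relations exist). Granting this, $ev(Q)=ev(P)\in RM_{L}(\nu,m)$, and since $RM_{L}(0,m)\subseteq RM_{L}(\nu,m)$ we obtain $C\subseteq RM_{L}(\nu,m)$.

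\textbf{Equality by projection and lifting.} To finish I would pass to the residue field. By Proposition 4.1, $\alpha(C)\subseteq\alpha(RM_{L}(\nu,m))=RM_{\mathbb{F}_{q}}(\nu,m)$, while the images under $\alpha$ of the generators of $C$ (the all-one tuple and the tuples \eqref{vectrace}) are exactly the constants and the finite-field trace vectors, which by the classical trace description of generalized Reed--Muller codes over $\mathbb{F}_{q}$ generate $RM_{\mathbb{F}_{q}}(\nu,m)$. Hence $\alpha(C)=\alpha(RM_{L}(\nu,m))$. By Theorem 5.2 the module $D:=RM_{L}(\nu,m)$ is free with a basis that extends the basis of $L^{q^{m}}$ of Proposition 5.1, so $D$ is a direct summand and $pL^{q^{m}}\cap D=pD$. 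Given $d\in D$ I choose $c_{0}\in C$ with $\alpha(c_{0})=\alpha(d)$, write $d-c_{0}=pd_{1}$ with $d_{1}\in D$, and iterate; since $p^{s}=0$ this terminates and yields $d=c_{0}+pc_{1}+\cdots+p^{s-1}c_{s-1}\in C$. Therefore $D\subseteq C$ and $C=RM_{L}(\nu,m)$. A variant that bypasses the reduction lemma would imitate Theorem \ref{tracedesc} directly: prove that $RM_{L}(\nu,m)^{-}$ is the cyclic code over $L$ whose check polynomial is the product of the minimal polynomials of the $\xi^{-j}$ with $w_{q}(j)\le\nu$, deduce that each trace vector is annihilated by it, and then match cardinalities via the rank formula of Theorem 5.2.
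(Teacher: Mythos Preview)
Your strategy inverts the paper's: you first aim for $C\subseteq RM_{L}(\nu,m)$ by realising each trace tuple as the evaluation of a polynomial of controlled degree, and then lift the finite-field trace description to obtain the reverse inclusion. The paper does the opposite. It proves $RM_{L}(\nu,m)\subseteq C$ directly, via the elementary identity
\[
T(\lambda_{i}z)\,T(\lambda_{j}z)\;=\;\sum_{k=0}^{m-1} T\!\bigl(\lambda_{i}\lambda_{j}^{f^{k}}\,z^{\,1+q^{k}}\bigr)\qquad(z\in\mathcal{T}_{m}),
\]
which, iterated, expresses every generator $\textbf{v}_{1}^{i_{1}}\cdots\textbf{v}_{m}^{i_{m}}$ as an $L$-linear combination of trace tuples with exponents $t$ satisfying $w_{q}(t)\le\sum_{k} i_{k}\le\nu$. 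This purely multiplicative argument is exactly the missing idea that your degree-reduction lemma was meant to supply, only run in the other direction. For the converse the paper punctures, checks that each trace vector is annihilated by $f_{\nu}^{*}(x)=(1-x)\prod_{1\le j\le q^{m}-2,\;w_{q}(j)\le\nu}(1-\xi^{j}x)$, and compares ranks to conclude $RM_{L}(\nu,m)^{-}=\mathcal{C}^{-}=\mathcal{C}_{\nu}$; your ``variant'' at the end is essentially this route.

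The obstacle you flag is a genuine gap. Even if relations $x_{i}^{q}\equiv h_{i}(X)$ with $\deg h_{i}\le q$ and $\deg_{x_{i}}h_{i}\le q-1$ exist on the Teichm\"uller point set, a single substitution can raise $\deg_{x_{k}}$ for $k\ne i$ by as much as $q$, so iterated reduction has no evident termination measure and no guaranteed bound on the total degree of the final $P\in S$. And your lifting step for $D\subseteq C$ already presupposes $C\subseteq D$: to write $d-c_{0}=pd_{1}$ with $d_{1}\in D$ you need $c_{0}\in D$ so that $d-c_{0}\in D\cap pL^{q^{m}}=pD$. Hence the whole argument hinges on the unproved reduction lemma. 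The cleanest repair is to adopt the paper's trace-product expansion for $RM_{L}(\nu,m)\subseteq C$; once that is in hand, your projection-and-lifting argument for $C\supseteq RM_{L}(\nu,m)$ goes through verbatim and in fact gives a pleasant alternative to the paper's rank comparison.
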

\begin{proof}
Let $\mathcal{C}$ be the code generated by the repetition code $RM_{L}(0,m)$ together with all tuples in (\ref{vectrace}). Since the matrix $G$ in (\ref{genmat}) is a generator matrix for the Kerdock code $\mathcal{K}$, then from Theorem \ref{tracedesc}, for each row $\textbf{v}_{j}$, $j=1,2,\ldots,m$ of $G$, there exists a unique $\lambda_{j}\in\mathcal{R}$ such that
\begin{equation*}
\textbf{v}_{j}=(0,T(\lambda_{j}),T(\lambda_{j} \xi^{j}),T(\lambda_{j}\xi^{2j}),\ldots,T(\lambda_{j}\xi^{(n-1)j})).
\end{equation*}
Thus, the lth coordinate of a tuple
\begin{equation*}
\textbf{v}_{1}^{i_{1}}\textbf{v}_{2}^{i_{2}}\ldots\textbf{v}_{m}^{i_{m}}\;,\; 0\leq i_j\leq q-1\;,\;1\leq j\leq m\;,\; \sum_{j=1}^{m}i_j\leq\nu
\end{equation*}
in the standard generator matrix of the GRM code $RM_{L}(\nu,m)$ is of the form
\begin{equation*}
T(\lambda_{1}z)^{i_1}T(\lambda_{2}z)^{i_2}\ldots T(\lambda_{m}z)^{i_m}\;,\;0\leq i_j\leq q-1\;,\; \sum_{i=1}^{m}i_{j}\leq\nu,
\end{equation*}
where $z=\xi^{l}$ with $\xi^{\infty}=0$. Now for a particular $i$ and $j$, we have
\begin{equation*}
\begin{aligned}
T(\lambda_{i}z)T(\lambda_{j}z)&=\sum_{u=0}^{m-1}(\lambda_{i}z)^{f^u}\sum_{v=0}^{m-1}(\lambda_{j}z)^{f^v}\\
&=\sum_{u=0}^{m-1}\lambda_{i}^{f^u}z^{q^u}\sum_{v=0}^{m-1}\lambda_{j}^{f^v}z^{q^v}\\
&=T(\lambda_{i}\lambda_{j}z^2)+T(\lambda_{i}\lambda_{j}^{f}z^{1+q})+T(\lambda_{i}\lambda_{j}^{f^2}z^{1+q^2})+\ldots\\&+T(\lambda_{i}\lambda_{j}^{f^{m-1}}z^{1+q^{m-1}})\\
&=\sum_{k=0}^{m-1}T(\lambda_{i}\lambda_{j}^{f^{k}}z^{1+q^{k}})=\sum_{t}T(\mu_{t}z^t).
\end{aligned}
\end{equation*}
where $t=1+q^k$ and $\mu_{t}=\lambda_{i}\lambda_{j}^{f^k}$, $0\leq k\leq m-1$. In general, if $\sum_{j=1}^{m}i_j=a\geq 1$, then
\begin{equation*}
T(\lambda_{1}z)^{i_1}T(\lambda_{2}z)^{i_2}\ldots T(\lambda_{m}z)^{i_m}=\sum_{t}T(\mu_{t}z^t),
\end{equation*}
where $t=1+q^{j_1}+q^{j_2}+\ldots+q^{j_{a-1}}$\;,\;$0\leq j_k\leq m-1$\;,\;$1\leq k\leq a-1$, and $\mu_t$ is the corresponding product of the powers of $\lambda_{1},\lambda_{2},\ldots,\lambda_{m}$.\\
It is easy to see that in this expansion of any $T(\lambda_{1}z)^{i_1}T(\lambda_{2}z)^{i_2}\ldots T(\lambda_{m}z)^{i_m}$, the corresponding powers $t$ of $z$ are some representatives of cyclotomic cosets modulo $q^m-1$ and $w_q(t)\leq\sum_{k=1}^{m}i_k$. It follows that each tuple $\textbf{v}_{1}^{i_{1}}\textbf{v}_{2}^{i_{2}}\ldots\textbf{v}_{m}^{i_{m}}$ in the generator matrix of the GRM code $RM_{L}(\nu,m)$ is a linear combination of the all one tuple $1^{q^m}$ and the tuples $(0,T(\lambda_{j}),T(\lambda_{j} \xi^{j}),T(\lambda_{j}\xi^{2j}),\ldots,T(\lambda_{j}\xi^{(n-1)j}))$, where $j$ ranges over a set of coset representatives modulo $q^m-1$ with $w_q(j)\leq\sum_{k=1}^{m}i_k\leq\nu$ and $\lambda_{j}$ ranges over $\mathcal{R}$. Hence $RM_{L}(\nu,m)\subseteq\mathcal{C}$.\\
Conversely, let $\mathcal{C}^{-}$ be the code obtained from $\mathcal{C}$ by puncturing at the first position. That is, $\mathcal{C}^{-}$ is generated by the all one tuple $1^n$ together with all tuples of the form $(T(\lambda_{j}),T(\lambda_{j} \xi^{j}),T(\lambda_{j}\xi^{2j}),\ldots,T(\lambda_{j}\xi^{(n-1)j}))$, where $j$ and $\lambda_{j}$ are as in (\ref{vectrace}).\\
Since $w_q(j)\leq\nu$, it is easy to verify that all these generators are annihilated by the polynomial
\begin{equation*}
f_{\nu}^{*}(x)=(1-x)\prod_{\stackrel{1\leq j\leq q^m-2}{w_q(j)\leq\nu}}(1-\xi^{j}x)
\end{equation*}
where $f_{\nu}^{*}(x)$ is the reciprocal polynomial of
\begin{equation*}
f_{\nu}(x)=(x-1)\prod_{\stackrel{1\leq j\leq q^m-2}{w_q(j)\leq\nu}}(x-\xi^{j})=\prod_{\stackrel{0\leq j\leq q^m-2}{w_q(j)\leq\nu}}(x-\xi^{j})
\end{equation*}
Let $g_{\nu}(x)$ be the reciprocal polynomial to the polynomial
\begin{equation*}
g_{\nu}^{*}(x)=\frac{x^{q^m-1}-1}{f_{\nu}(x)}=\prod_{\stackrel{1\leq j\leq q^m-2}{w_q(j)>\nu}}(x-\xi^{j})
\end{equation*}
and denote by $\mathcal{C}_{\nu}=(g_{\nu}(x))$ the $L$-cyclic code generated by $g_{\nu}(x)$. Then, $f_{\nu}^{*}(x)$ is the check polynomial of $\mathcal{C}_{\nu}$. Therefore, $\mathcal{C}^{-}\subseteq \mathcal{C}_{\nu}$. Thus, $RM_{L}(\nu,m)^{-}\subseteq\mathcal{C}^{-}\subseteq \mathcal{C}_{\nu}$. Clearly,
\begin{equation*}
g_{\nu}(x)=\prod_{\stackrel{1\leq j\leq q^m-2}{w_q(j)>\nu}}(1-\xi^{j}x).
\end{equation*}
We have
\begin{equation*}
\begin{aligned}
\rank \mathcal{C}_{\nu}&=q^m-1-\deg g_{\nu}(x)\\
&=\card(\{j\mid 0\leq j\leq q^m-2\;,\;w_q(j)\leq\nu\})\\
&=\rank RM_{L}(\nu,m)\\
&=\rank RM_{L}(\nu,m)^{-}.
\end{aligned}
\end{equation*}
It follows that $RM_{L}(\nu,m)^{-}=\mathcal{C}^{-}=\mathcal{C}_{\nu}$.
\end{proof}
\begin{cor}\label{polygen}
$RM_{L}(\nu,m)^{-}$ is a $L$-cyclic code generated by the polynomial
\begin{equation}\label{polgener}
g_{\nu}(x)=\prod_{\stackrel{1\leq j\leq q^m-2}{w_q(j)\leq m(q-1)-\nu-1}}(x-\xi^{j}).
\end{equation}
\end{cor}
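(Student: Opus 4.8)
The plan is to derive this directly from the conclusion of Theorem \ref{tracedescri}, which already identifies $RM_{L}(\nu,m)^{-}$ with the cyclic code $\mathcal{C}_{\nu}=(g_{\nu}(x))$ whose generator, as computed at the end of that proof, is
\[
g_{\nu}(x)=\prod_{\stackrel{1\leq j\leq q^m-2}{w_q(j)>\nu}}(1-\xi^{j}x).
\]
Everything reduces to rewriting this generator in the monic ``root form'' of (\ref{polgener}). First I would read off the roots of $g_{\nu}(x)$: each factor $1-\xi^{j}x$ vanishes at $x=\xi^{-j}=\xi^{\,q^m-1-j}$, since $\xi$ has order $n=q^m-1$ in $\mathcal{R}$. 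Thus the roots are exactly $\xi^{\,q^m-1-j}$ as $j$ runs over $1\leq j\leq q^m-2$ with $w_q(j)>\nu$, and the distinct integers $q^m-1-j$ again lie in $\{1,\ldots,q^m-2\}$.

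The key step, which I expect to be the technical heart of the argument, is a complementation identity for the $q$-weight. Writing $j=\sum_{i=0}^{m-1}a_iq^i$ with $0\leq a_i\leq q-1$, and using $q^m-1=\sum_{i=0}^{m-1}(q-1)q^i$, one obtains the valid base-$q$ expansion $q^m-1-j=\sum_{i=0}^{m-1}(q-1-a_i)q^i$, each digit $q-1-a_i$ again lying in $\{0,\ldots,q-1\}$. Summing digits gives
\[
w_q(q^m-1-j)=m(q-1)-w_q(j).
\]
Together with the observation that $j\mapsto j'=q^m-1-j$ is an involution, hence a bijection, of $\{1,\ldots,q^m-2\}$ onto itself, this identity is what converts the weight condition into the complementary one.

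With the identity in hand I would reindex by $j'=q^m-1-j$. The roots then become $\xi^{j'}$, and the defining condition $w_q(j)>\nu$ transforms into $m(q-1)-w_q(j')>\nu$, i.e.\ $w_q(j')\leq m(q-1)-\nu-1$. Hence the root set of $g_{\nu}(x)$ is precisely $\{\xi^{j'}\mid 1\leq j'\leq q^m-2,\ w_q(j')\leq m(q-1)-\nu-1\}$, which is exactly the root set of the product on the right of (\ref{polgener}); since $j'\mapsto\xi^{j'}$ is injective on $\{1,\ldots,q^m-2\}$, no root is lost or repeated, and the two polynomials have the same degree.

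Finally I would address normalization. The generator furnished by Theorem \ref{tracedescri} and the monic product in (\ref{polgener}) share the same roots but differ by the leading coefficient $\prod(-\xi^{j})$, which is a unit of $\mathcal{R}$ (a product of $-1$ and powers of the invertible element $\xi$). Since two polynomials differing by a unit factor generate the same ideal in $L[x]/(x^n-1)$, the monic polynomial $\prod_{w_q(j)\leq m(q-1)-\nu-1}(x-\xi^{j})$ generates the same $L$-cyclic code $RM_{L}(\nu,m)^{-}$, giving the stated form. The only points needing care are this unit-factor bookkeeping and the verification that the reindexing is a genuine bijection preserving the admissible range $1\leq j\leq q^m-2$.
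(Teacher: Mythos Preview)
Your proposal is correct and follows essentially the same route as the paper: start from the factorization $g_{\nu}(x)=\prod_{w_q(j)>\nu}(1-\xi^{j}x)$ obtained in Theorem~\ref{tracedescri}, read off the roots $\xi^{-j}=\xi^{q^m-1-j}$, invoke the complementation identity $w_q(q^m-1-j)=m(q-1)-w_q(j)$, and reindex by $J=q^m-1-j$. Your explicit treatment of the unit-factor normalization and the bijectivity of the reindexing is in fact a bit more careful than the paper, which silently passes from $\prod(1-\xi^{j}x)$ to the monic product $\prod(x-\xi^{-j})$.
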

\begin{proof}
By the proof of Theorem \ref{tracedescri}, we have $RM_{L}(\nu,m)^{-}=(g_{\nu}(x))$ with
\begin{equation*}
g_{\nu}(x)=\prod_{\stackrel{1\leq j\leq q^m-2}{w_q(j)>\nu}}(1-\xi^{j}x).
\end{equation*}
The zeros of $g_{\nu}(x)$ are all $\xi^{-j}$ with $w_q(j)>\nu$. Since $\xi^{-j}=\xi^{q^m-1-j}$ and $w_q(q^m-1-j)=m(q-1)-w_q(j)$, we have
\begin{equation*}
g_{\nu}(x)=\prod_{\stackrel{1\leq j\leq q^m-2}{w_q(j)>\nu}}(x-\xi^{-j})=\prod_{\stackrel{1\leq j\leq q^m-2}{w_q(j)>\nu}}(x-\xi^{q^m-1-j}).
\end{equation*}
Let $J=q^m-1-j$. We have $J\neq0$ and $w_q(J)=m(q-1)-w_q(j)$. Thus, $w_q(j)=m(q-1)-w_q(J)>\nu$. This implies that $w_q(J)<m(q-1)-\nu$. Then
\begin{equation*}
g_{\nu}(x)=\prod_{0<w_q(J)\leq m(q-1)-\nu-1}(x-\xi^{J}).
\end{equation*}
Finally, we have
\begin{equation*}
g_{\nu}(x)=\prod_{\stackrel{1\leq j\leq q^m-2}{w_q(j)\leq m(q-1)-\nu-1}}(x-\xi^{j}).
\end{equation*}
\end{proof}

\section{Dual code}
\begin{prop}\label{zerosum}
Let $m$ be a positive integer such that $rm\geq s$ and $n=q^m-1$ with $q=p^r$. Let $\nu$ be an integer such that $0\leq\nu<m(q-1)$. Then, for any $c=(c_{\infty},c_0,c_1,\ldots,c_{n-1})\in RM_{L}(\nu,m)$, we have $c_{\infty}+c_0+c_1+\ldots+c_{n-1}=0$.
\end{prop}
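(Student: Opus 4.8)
The plan is to exploit that the coordinate-sum $\sigma(c)=c_\infty+c_0+\cdots+c_{n-1}$ is an $L$-linear map $L^{q^m}\to L$, so it is enough to verify $\sigma=0$ on a generating family of $RM_L(\nu,m)$. For $\nu=0$ the code is the repetition code and the claim reduces to the single generator treated below; for $1\le\nu<m(q-1)$ I would use the generating family supplied by Theorem \ref{tracedescri}, namely the repetition code $RM_L(0,m)$ together with the tuples $u^{(\lambda_j)}=(0,T(\lambda_j),T(\lambda_j\xi^{j}),\ldots,T(\lambda_j\xi^{(n-1)j}))$ with $\lambda_j\in\mathcal{R}$ and $j$ running over the relevant cyclotomic cosets.

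The repetition generator is the all-one tuple $1^{q^m}$, and $\sigma(1^{q^m})=q^m\cdot 1$. Since $q^m=p^{rm}$ and $L$ has characteristic $p^s$, the hypothesis $rm\ge s$ gives $p^{rm}=p^{rm-s}p^{s}=0$, so $\sigma(1^{q^m})=0$. This is the only step that uses $rm\ge s$.

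For a trace tuple, linearity of $T$ gives $\sigma(u^{(\lambda_j)})=\sum_{l=0}^{n-1}T(\lambda_j\xi^{lj})=T\bigl(\lambda_j\sum_{l=0}^{n-1}(\xi^{j})^{l}\bigr)$, so everything comes down to the inner geometric sum. Whenever $\xi^{j}\neq 1$, the element $\xi^{j}-1$ reduces modulo $p$ to $\bar\xi^{\,j}-1\neq 0$ in $\mathbb{F}_{q^m}$ (as $\bar\xi$ has order $n$); hence $\xi^{j}-1$ lies outside the maximal ideal $p\mathcal{R}$ and is a unit. Then $\sum_{l=0}^{n-1}(\xi^{j})^{l}=\bigl((\xi^{j})^{n}-1\bigr)(\xi^{j}-1)^{-1}=0$, because $(\xi^{j})^{n}=(\xi^{n})^{j}=1$. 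Therefore $\sigma(u^{(\lambda_j)})=T(0)=0$.

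The delicate point, and where the strict inequality $\nu<m(q-1)$ is indispensable, is to guarantee that every coset actually occurring has $\xi^{j}\neq 1$, i.e. $j\not\equiv 0\pmod{q^m-1}$. Following the expansion used to prove Theorem \ref{tracedescri}, a tuple $\textbf{v}_1^{i_1}\cdots\textbf{v}_m^{i_m}$ with $\sum_k i_k=a$ produces only exponents $t=1+q^{j_1}+\cdots+q^{j_{a-1}}$, a sum of $a$ powers of $q$. An exponent with $t\equiv 0\pmod{q^m-1}$ and $t\ge 1$ would be a positive multiple of $q^m-1=\sum_{l=0}^{m-1}(q-1)q^{l}$, and a short induction on $m$ shows that writing any such multiple as a nonnegative combination of $1,q,\ldots,q^{m-1}$ needs at least $m(q-1)$ summands; thus $a\ge m(q-1)$. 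Since here $a\le\nu<m(q-1)$, no constant ($\xi^{j}=1$) term can appear; equivalently, only the top monomial $x_1^{q-1}\cdots x_m^{q-1}$ of degree $m(q-1)$ could create one, and it is excluded. Hence the geometric-sum argument applies to every trace generator, $\sigma$ annihilates the whole generating family, and by linearity $\sigma\equiv 0$ on $RM_L(\nu,m)$. I expect this combinatorial exclusion of the constant term to be the main obstacle; the remaining steps are routine once the unit property of $\xi^{j}-1$ and the vanishing $q^m=0$ are in hand.
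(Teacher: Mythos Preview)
Your proof is correct and follows the same route as the paper: reduce to the generators supplied by Theorem~\ref{tracedescri}, dispose of the all-one tuple via $q^{m}=p^{rm}=0$ (using $rm\ge s$), and annihilate each trace tuple by a geometric-sum identity---the paper expands $T$ first and sums $\sum_i \xi^{ijq^k}=\dfrac{1-\xi^{jnq^k}}{1-\xi^{jq^k}}=0$, while you use $L$-linearity of $T$ to pull it outside, which is marginally cleaner but equivalent. Your final paragraph arguing combinatorially that no exponent $t\equiv 0\pmod{q^{m}-1}$ can arise when $\nu<m(q-1)$ is extra rigor the paper omits (it tacitly assumes $1-\xi^{jq^{k}}$ is a unit); the point is well taken, though a lighter justification is simply that in the proof of Theorem~\ref{tracedescri} the indices explicitly range over $1\le j\le q^{m}-2$.
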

\begin{proof}
It is enough to prove our Proposition for all generators of the GRM code $RM_{L}(\nu,m)$ given in Theorem \ref{tracedescri}. First, since $rm\geq s$, then for $1^{q^m}$, we have $\underbrace{1+1+\ldots+1}_{q^m terms}=q^m=p^{rm}=0$. Second, for the $q^m$-tuples (\ref{vectrace}), we have
\begin{equation*}
\begin{aligned}
\sum_{i=0}^{n-1}T(\lambda_{j}\xi^{ij})&=\sum_{i=0}^{n-1}\sum_{k=0}^{m-1}(\lambda_{j}\xi^{ij})^{f^k}\\
&=\sum_{k=0}^{m-1}\lambda_{j}^{f^k}\sum_{i=0}^{n-1}\xi^{ijq^k}\\
&=\sum_{k=0}^{m-1}\lambda_{j}^{f^k}\frac{1-\xi^{jnq^k}}{1-\xi^{jq^k}}=0
\end{aligned}
\end{equation*}
since $\xi^{n}=1$.
\end{proof}
\begin{thm}
Let $m$ be a positive integer such that $rm\geq s$ and $0\leq\nu<m(q-1)$ with $q=p^r$. Then
\begin{equation}
RM_{L}(\nu,m)^{\bot}=RM_{L}(\mu,m)
\end{equation}
where $\mu=m(q-1)-\nu-1$.
\end{thm}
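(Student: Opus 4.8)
\section{Dual code (proof proposal)}

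The plan is to reduce the stated equality to a single inclusion together with a dimension count. I would first prove $RM_L(\mu,m)\subseteq RM_L(\nu,m)^{\bot}$ and then check that the two sides have the same cardinality, whence they must coincide. For the cardinality bookkeeping, recall that $L=GR(p^s,r)$ is a finite chain ring, so $\card C\cdot\card C^{\bot}=(\card L)^{q^m}$ for every linear code $C$ of length $q^m$; consequently $\card RM_L(\nu,m)^{\bot}=(\card L)^{q^m-k_\nu}$, where $k_\nu=\rank RM_L(\nu,m)$.

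Next I would verify that $k_\mu=q^m-k_\nu$. By Remark \ref{cellnumber} and the $q$-weight count, $k_\nu=\card\{j:0\le j\le q^m-1,\ w_q(j)\le\nu\}$, and similarly for $\mu$. The involution $j\mapsto q^m-1-j$ on $\{0,\dots,q^m-1\}$ satisfies $w_q(q^m-1-j)=m(q-1)-w_q(j)$, and since the condition $w_q(j)\le\mu=m(q-1)-\nu-1$ is equivalent to $m(q-1)-w_q(j)>\nu$, this involution carries $\{w_q\le\mu\}$ bijectively onto $\{w_q>\nu\}$. Hence $k_\mu=\card\{j:0\le j\le q^m-1,\ w_q(j)>\nu\}=q^m-k_\nu$, so $\card RM_L(\mu,m)=(\card L)^{k_\mu}=\card RM_L(\nu,m)^{\bot}$.

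For the inclusion I would work with the trace generators of Theorem \ref{tracedescri}, so that by bilinearity it suffices to pair a generator of $RM_L(\mu,m)$ with a generator of $RM_L(\nu,m)$. The pairings involving the all-one tuple are controlled by Proposition \ref{zerosum}: $\langle 1^{q^m},1^{q^m}\rangle=q^m=p^{rm}=0$ because $rm\ge s$, while $\langle 1^{q^m},c\rangle$ is just the component sum of a codeword $c\in RM_L(\eta,m)$ with $\eta<m(q-1)$, hence $0$. The essential case is $\langle u^{(\lambda_j)},u^{(\lambda_{j'})}\rangle$ with $w_q(j)\le\nu$ and $w_q(j')\le\mu$ (both $j,j'\ge 1$). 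Writing $T(\lambda\xi^{ij})=\sum_{u=0}^{m-1}\lambda^{f^u}\xi^{ijq^u}$, expanding the product and summing over $i=0,\dots,n-1$, the inner geometric sum $\sum_{i=0}^{n-1}\xi^{i(jq^u+j'q^v)}$ vanishes unless $jq^u+j'q^v\equiv 0\pmod n$.

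The crux, and the step I expect to be the main obstacle, is to rule out the congruence $jq^u+j'q^v\equiv 0\pmod n$. Since $q$ is a unit modulo $n=q^m-1$, it is equivalent to $j'\equiv -jq^{u-v}\pmod n$; multiplication by a power of $q$ only cyclically permutes the base-$q$ digits of a residue, while $j\mapsto -j\equiv q^m-1-j$ complements them, so any such $j'$ would satisfy $w_q(j')=m(q-1)-w_q(j)\ge m(q-1)-\nu>\mu$, contradicting $w_q(j')\le\mu$. Therefore no pair $(u,v)$ survives, every cross pairing vanishes, and $RM_L(\mu,m)\subseteq RM_L(\nu,m)^{\bot}$. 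Combined with the cardinality equality above, this forces $RM_L(\nu,m)^{\bot}=RM_L(\mu,m)$. I would finally dispatch the endpoints $\nu=0$ and $\mu=0$ directly, where the relevant code is the repetition code and only the zero-sum pairing of Proposition \ref{zerosum} is needed, since Theorem \ref{tracedescri} is stated for order at least $1$.
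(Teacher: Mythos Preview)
Your proof is correct, and it takes a genuinely different route from the paper's.

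Both arguments follow the same high-level scheme (inclusion $RM_L(\mu,m)\subseteq RM_L(\nu,m)^{\bot}$ plus a size/rank count), but the execution differs in both halves. For the inclusion, the paper passes through the cyclic description of Corollary~\ref{polygen}: it reduces to $c_\infty=0$, shows that the punctured word $c'(x)$ is a multiple of $(x-1)g_\mu(x)$, hence annihilated by $f_\mu(x)$, and identifies $f_\mu^{*}(x)$ with $g_\nu(x)$ to place $c'$ in $(RM_L(\nu,m)^{-})^{\bot}$. You instead pair the trace generators of Theorem~\ref{tracedescri} directly, reducing everything to the vanishing of $\sum_{i=0}^{n-1}\xi^{i(jq^u+j'q^v)}$ and ruling out $jq^u+j'q^v\equiv 0\pmod n$ via the digit-complement identity $w_q(n-j)=m(q-1)-w_q(j)$. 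For the size count, the paper adds the degrees of $g_\nu$ and $g_\mu$, while you use the involution $j\mapsto q^m-1-j$ on $\{0,\dots,q^m-1\}$ to get $k_\mu=q^m-k_\nu$ combinatorially. Your approach avoids the cyclic-code/annihilator machinery entirely and is closer in spirit to a character-sum computation; the paper's approach is more structural and reuses the generator-polynomial description it has already established. Both hinge on the same underlying $q$-weight symmetry, but your packaging is more elementary and self-contained, while the paper's makes the connection to the punctured cyclic code explicit. Your remark that $1-\xi^{k}$ is a unit in $\mathcal{R}$ whenever $k\not\equiv 0\pmod n$ (so the geometric sum genuinely vanishes in the ring, not just the residue field) is the one point worth stating explicitly, and your separate treatment of the endpoints $\nu=0$ and $\mu=0$ is appropriate since Theorem~\ref{tracedescri} is stated for order at least~$1$.
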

\begin{proof}
First, we prove that the all one $q^m$-tuple $1^{q^m}\in RM_{L}(\nu,m)$ belongs to $RM_{L}(\nu,m)^{\bot}$. Since $rm\geq s$, then $1^{q^m}\cdot 1^{q^m}=0$. Moreover, we have to prove that $1^{q^m}$ is orthogonal to all $q^m$-tuples of the form (\ref{vectrace}), where $w_q(j)\leq \nu$. By the proof of Proposition \ref{zerosum}, we have
\begin{equation*}
\sum_{i=0}^{n-1}T(\lambda_{j}\xi^{ij})=0.
\end{equation*}
Therefore, $1^{q^m}\in RM_{L}(\nu,m)^{\bot}$.\\
Next, we prove that any $c=(c_{\infty},c_0,c_1,\ldots,c_{n-1})\in RM_{L}(\mu,m)$ belongs to $RM_{L}(\nu,m)^{\bot}$. Clearly, $c\in RM_{L}(\mu,m)$ if and only if $c-c_{\infty}1^{q^m}\in RM_{L}(\mu,m)$, and $c\in RM_{L}(\nu,m)^{\bot}$ if and only if $c-c_{\infty}1^{q^m}\in RM_{L}(\nu,m)^{\bot}$. Therefore, it is sufficient to show that for $c$ with $c_{\infty}=0$, $c\in RM_{L}(\mu,m)$ implies $c\in RM_{L}(\nu,m)^{\bot}$.\\
Let $c=(0,c')\in RM_{L}(\mu,m)$, where $c'=(c_0,c_1,\ldots,c_{n-1})$.\\Then $c'\in RM_{L}(\mu,m)^{-}$. By Corollary \ref{polygen}, $RM_{L}(\mu,m)^{-}$ is a $L$-cyclic code with generator polynomial
\begin{equation}
g_{\mu}(x)=\prod_{\stackrel{1\leq j\leq q^m-2}{w_q(j)\leq \nu}}(x-\xi^{j}).
\end{equation}
So, $c'(x)=c_0+c_1x+\ldots+c_{n-1}x^{n-1}$ is a multiple of $g_{\mu}(x)$. By Proposition \ref{zerosum}, $c'(1)=c_0+c_1+\ldots+c_{n-1}=0$. Then $c'(x)$ is also a multiple of $x-1$.\\
Since $\bar{g}_{\mu}(1)\neq 0$, $g_{\mu}(1)$ is an invertible element of $L$. It follows that $c'(x)$ is a multiple of $(x-1)g_{\mu}(x)$. Then, $c'(x)$ is annihilated by the polynomial
\begin{equation*}
f_{\mu}(x)=\frac{x^{n}-1}{(x-1)g_{\mu}(x)}=\prod_{\stackrel{1\leq j\leq q^m-2}{w_q(j)>\nu}}(x-\xi^{j})
\end{equation*}
i.e. $c'(x)f_{\mu}(x)=0$.\\
Therefore $c'(x)$ belongs to the dual code of the $L$-cyclic code with generator poltnomial
\begin{equation*}
\begin{aligned}
f_{\mu}^{*}(x)&=\prod_{\stackrel{1\leq j\leq q^m-2}{w_q(j)> \nu}}(1-\xi^{j}x)=\prod_{\stackrel{1\leq j\leq q^m-2}{w_q(j)\leq m(q-1)-\nu-1}}(x-\xi^{j})\\
&=g_{\nu}(x).
\end{aligned}
\end{equation*}
By Corollary \ref{polygen}, $RM_{L}(\nu,m)^{-}=(g_{\nu}(x))$. Hence, $c'(x)\in (RM_{L}(\nu,m)^{-})^{\bot}$. Since $c_{\infty}=0$, $c\in RM_{L}(\nu,m)^{\bot}$. Therefore, we have proved that $RM_{L}(\mu,m)\subseteq RM_{L}(\nu,m)^{\bot}$.\\
To check the ranks, we first note that $\rank RM_{L}(\nu,m)=\rank RM_{L}(\nu,m)^{-}$ as $RM_{L}(\nu,m)$ is a parity check extension of $RM_{L}(\nu,m)^{-}$. By Corollary \ref{polygen}, $RM_{L}(\nu,m)^{-}$ has the generator polynomial $g_{\nu}(x)$ as given in (\ref{polgener}), and\\$RM_{L}(\mu,m)^{-}$ has the generator polynomial $g_{\mu}(x)$. $g_{\mu}(x)$ is the reciprocal polynomial of $\frac{x^n-1}{(x-1)g_{\nu}(x)}$. Therefore, we have
\begin{equation*}
\begin{aligned}
\rank RM_{L}(\nu,m)^{-}+\rank RM_{L}(\mu,m)^{-}&=(n-\deg g_{\nu}(x))+(n-\deg g_{\mu}(x))\\
&=n+1=q^m.
\end{aligned}
\end{equation*}
It follows that $\rank RM_{L}(\nu,m)+\rank RM_{L}(\mu,m)=q^m$.\\
Thus, $\rank RM_{L}(\mu,m)=\rank RM_{L}(\nu,m)^{\bot}$.\\
And we have $RM_{L}(\mu,m)=RM_{L}(\nu,m)^{\bot}$.
\end{proof}

\section{Minimum distance}
\begin{thm}
The shortened GRM code $RM_{L}(\nu,m)^{-}$ is a subcode of a BCH code of length $q^m-1$ over $L$ whose roots include $\xi,\xi^2,\ldots,\xi^{(R+1)q^Q-2}$ where $\xi$ is a primitive element of $\mathcal{R}=GR(p^s,rm)$, and $Q$ and $R$ are the quotient and remainder respectively, resulting from dividing $\mu+1=m(q-1)-\nu$ by $q-1$.
\end{thm}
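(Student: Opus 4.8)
The plan is to read the zeros of the shortened code off its generator polynomial and then verify that they contain the required block of consecutive powers of $\xi$; the subcode relation will then follow formally from the correspondence between cyclic codes over $L$ and divisors of $x^n-1$. By Corollary \ref{polygen}, $RM_{L}(\nu,m)^{-}$ is the $L$-cyclic code generated by
\begin{equation*}
g_{\nu}(x)=\prod_{\stackrel{1\leq j\leq q^m-2}{w_q(j)\leq m(q-1)-\nu-1}}(x-\xi^{j}),
\end{equation*}
so its zeros are exactly the $\xi^{j}$, $1\leq j\leq q^m-2$, with $w_q(j)\leq\mu$, where $\mu=m(q-1)-\nu-1$. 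Since $n=q^m-1$ is coprime to $p$, the elements $\xi^{0},\xi^{1},\ldots,\xi^{n-1}$ are distinct, $x^n-1$ has simple roots over $\mathcal{R}$, and $L$-cyclic codes of length $n$ correspond bijectively to divisors of $x^n-1$, the larger code corresponding to the generator of smaller degree. Hence, to prove that $RM_{L}(\nu,m)^{-}$ sits inside the BCH code having $\xi,\xi^2,\ldots,\xi^{(R+1)q^{Q}-2}$ among its roots, it suffices to show that each of these consecutive powers is a zero of $g_{\nu}(x)$, i.e. that $w_q(j)\leq\mu$ for every $j$ with $1\leq j\leq(R+1)q^{Q}-2$.

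Write the Euclidean division $\mu+1=m(q-1)-\nu=Q(q-1)+R$ with $0\leq R\leq q-2$, so that $\mu=Q(q-1)+R-1$. First I would check that the block lies in range, namely $(R+1)q^{Q}-2\leq q^m-2$: when $R=0$ the relation $Q(q-1)\leq m(q-1)$ gives $Q\leq m$ and $(R+1)q^{Q}=q^{Q}\leq q^m$, while for $R\geq1$ the same relation forces $Q\leq m-1$ and $(R+1)q^{Q}\leq(q-1)q^{m-1}<q^m$. The core of the argument is the $q$-weight estimate. Every integer $j$ in $[0,(R+1)q^{Q}-1]$ has $q$-adic expansion with digit $a_{Q}\leq R$ in position $Q$ and digits $a_i\leq q-1$ in positions $i<Q$, whence $w_q(j)=a_{Q}+\sum_{i<Q}a_i\leq R+Q(q-1)=\mu+1$. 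Moreover equality $w_q(j)=\mu+1$ forces $a_{Q}=R$ together with $a_i=q-1$ for all $i<Q$, that is, $j=(R+1)q^{Q}-1$. Therefore every $j$ in the strictly smaller range $1\leq j\leq(R+1)q^{Q}-2$ satisfies $w_q(j)\leq\mu$, as required.

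Consequently $\xi,\xi^{2},\ldots,\xi^{(R+1)q^{Q}-2}$ are all zeros of $g_{\nu}(x)$, so the generator polynomial of the BCH code defined by these consecutive roots (the least common multiple of their minimal polynomials over $L$) divides $g_{\nu}(x)$, and hence $RM_{L}(\nu,m)^{-}\subseteq\mathcal{B}$, where $\mathcal{B}$ is that BCH code. The main obstacle is the weight estimate of the second step: one must argue not merely that the maximal $q$-weight on $[0,(R+1)q^{Q}-1]$ equals $\mu+1$, but that this maximum is attained \emph{only} at the top endpoint $(R+1)q^{Q}-1$. It is precisely the exclusion of this single value, by stopping the consecutive block at $(R+1)q^{Q}-2$, that pushes the weight of every remaining exponent down to $\mu$ and thereby makes each corresponding $\xi^{j}$ a genuine zero of $g_{\nu}$.
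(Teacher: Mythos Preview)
Your proof is correct and follows essentially the same approach as the paper: both identify the zeros of $g_{\nu}(x)$ via Corollary~\ref{polygen} and then verify that every integer $j$ with $1\leq j\leq (R+1)q^{Q}-2$ has $q$-weight at most $\mu$, the paper phrasing this as ``$(R+1)q^{Q}-1$ is the smallest integer of $q$-weight $\mu+1$''. Your version is in fact more careful, since you explicitly justify the uniqueness of the maximum-weight integer in the range and add the check that $(R+1)q^{Q}-2\leq q^{m}-2$, both of which the paper leaves implicit.
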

\begin{proof}
Let $d$ be the smallest integer such that $w_q(d)=m(q-1)-\nu=(q-1)Q+R$\;,\;$0\leq R<q-1$. Therefore, we must have $d=Rq^Q+(q-1)q^{Q-1}+(q-1)q^{Q-2}+\ldots+(q-1)q+(q-1)=(R+1)q^{Q}-1$.\\
Also, every integer less than $d$ has $q$-weight less than or equal to $m(q-1)-\nu-1$. It follows from (\ref{polgener}) that all elements $\xi,\xi^2,\ldots,\xi^{(R+1)q^Q-2}$ are roots of $RM_{L}(\nu,m)^{-}$. Thus, $RM_{L}(\nu,m)^{-}$ is a subcode of a primitive BCH code of length $q^m-1$ over $L$.
\end{proof}
Consequently, from BCH bound on codes over Galois rings \cite{BW}, $RM_{L}(\nu,m)^{-}$ has minimum distance at least $(R+1)q^Q-1$, where $Q$ and $R$ are the quotient and remainder respectively, resulting from dividing $\mu+1=m(q-1)-\nu$ by $q-1$. It can be easily seen from the structure of $RM_{L}(\nu,m)$ that the minimum distance of $RM_{L}(\nu,m)$ is equal to the minimum distance of $RM_{L}(\nu,m)^{-}$. Hence the minimum distance of $RM_{L}(\nu,m)$ is at least $(R+1)q^Q-1$.
\begin{thm}
The GRM code $RM_{L}(\nu,m)$ has minimum distance $(R+1)q^Q-1$, where $Q$ and $R$ are the quotient and remainder respectively, resulting from dividing $\mu+1=m(q-1)-\nu$ by $q-1$.
\end{thm}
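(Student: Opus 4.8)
The plan is to marry the lower bound already obtained with a matching upper bound produced by displaying one codeword of weight exactly $(R+1)q^Q-1$. The paragraph preceding the statement gives $d(RM_L(\nu,m))=d(RM_L(\nu,m)^{-})\geq (R+1)q^Q-1$ through the BCH bound over Galois rings, so the whole theorem reduces to the reverse inequality; moreover, by the quoted equality of the two minimum distances it is enough to locate the required word inside the shortened code $RM_L(\nu,m)^{-}$.

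First I would record the corresponding fact over the residue field. Since $\alpha$ acts coordinatewise it is compatible with puncturing at the $\infty$-coordinate, so the projection Proposition of Section 4 yields $\alpha(RM_L(\nu,m)^{-})=RM_{\mathbb{F}_q}(\nu,m)^{-}$; in particular $\alpha$ is surjective here. By the classical theory of GRM codes over finite fields \cite{BM}, the code $RM_{\mathbb{F}_q}(\nu,m)^{-}$ contains a word of weight $(R+1)q^Q-1$: an explicit one is the evaluation of $\bar P(X)=\prod_{j=Q+2}^{m}(1-x_j^{q-1})\prod_{\gamma\in\mathbb{F}_q\setminus A}(x_{Q+1}-\gamma)$, where $A\subseteq\mathbb{F}_q$ with $0\in A$ and $|A|=R+1$, punctured at the origin. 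One checks that $\deg\bar P=(m-Q)(q-1)-R=\nu$, that $\bar P$ is nonzero exactly on the $(R+1)q^Q$ points with $x_{Q+2}=\cdots=x_m=0$ and $x_{Q+1}\in A$, and that the origin lies among them; hence the punctured vector $\bar c$ has weight $(R+1)q^Q-1$.

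The delicate point, which I expect to be the main obstacle, is that an arbitrary lift of $\bar c$ to $RM_L(\nu,m)^{-}$ need not preserve the weight, because a lift can only enlarge the support by coordinates lying in $pL$. The remedy is to pass to the bottom $p$-adic layer. Using the surjectivity above, choose $\tilde c\in RM_L(\nu,m)^{-}$ with $\alpha(\tilde c)=\bar c$, and set $c=p^{s-1}\tilde c$; this is again a codeword because $RM_L(\nu,m)^{-}$ is an $L$-module. In $L=GR(p^s,r)$ one has $p^{s-1}x=0$ if and only if $x\in pL$, so the $i$th coordinate $p^{s-1}\tilde c_i$ vanishes exactly when $\tilde c_i\in pL$, that is, exactly when $\bar c_i=\alpha(\tilde c_i)=0$. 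Therefore $\supp(c)=\supp(\bar c)$ and $c$ has weight $(R+1)q^Q-1$.

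This exhibits the sought codeword, giving $d(RM_L(\nu,m)^{-})\leq (R+1)q^Q-1$. Combined with the lower bound and the equality $d(RM_L(\nu,m))=d(RM_L(\nu,m)^{-})$, we conclude $d(RM_L(\nu,m))=(R+1)q^Q-1$.
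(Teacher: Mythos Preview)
Your proof is correct and follows essentially the same route as the paper: cite the BCH lower bound, pick a minimum-weight word $\bar c$ in the residue-field GRM code, lift it to $\tilde c$ via the surjectivity of $\alpha$, and multiply by $p^{s-1}$ so that the support collapses exactly to $\supp(\bar c)$. The only cosmetic differences are that the paper works directly in $RM_L(\nu,m)$ rather than in the punctured code, and that it simply invokes the known minimum distance of $RM_{\mathbb{F}_q}(\nu,m)$ instead of writing down your explicit polynomial $\bar P(X)$.
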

\begin{proof}
since the minimum distance of $RM_{L}(\nu,m)$ is at least $(R+1)q^Q-1$, we only need to show a tuple of weight $(R+1)q^Q-1$ in $RM_{L}(\nu,m)$. The image $\alpha(RM_{L}(\nu,m))=RM_{\mathbb{F}_q}(\nu,m)$ has minimum distance exactly $(R+1)q^Q-1$. Let $\textbf{u}=(u_{\infty},u_{0},u_{1},\ldots,u_{n-1})$ be a vector of weight $(R+1)q^Q-1$ in $RM_{\mathbb{F}_q}(\nu,m)$. Let $I=\supp (\textbf{u})=\{i\mid u_i\neq 0\}$ the support of $\textbf{u}$. Then, $\card(I)=(R+1)q^Q-1$. Then, there exists a vector $\textbf{v}=(v_{\infty},v_{0},v_{1},\ldots,v_{n-1})\in RM_{L}(\nu,m)$ such that $\alpha(\textbf{v})=\bar{\textbf{v}}=\textbf{u}$ i.e. $(\bar{v}_{\infty},\bar{v}_{0},\bar{v}_{1},\ldots,\bar{v}_{n-1})=(u_{\infty},u_{0},u_{1},\ldots,u_{n-1})$. Thus, $\bar{v}_{i}=u_{i}$ for all $i$.\\
If $i\notin I$, then $u_{i}=0$. Thus, $v_{i}$ is in $pL$, and $p^{s-1}v_{i}=0$.\\
If $i\in I$, then $u_{i}\neq 0$ and $v_{i}\notin pL$, i.e. $v_{i}$ is an invertible element of $L$, and $p^{s-1}v_{i}\neq 0$. Therefore, $p^{s-1}\textbf{v}\in RM_{L}(\nu,m)$ and $p^{s-1}\textbf{v}$ is of weight $(R+1)q^Q-1$. Hence, $RM_{L}(\nu,m)$ has minimum distance $(R+1)q^Q-1$.
\end{proof}

\end{document}